\documentclass[11pt]{article}
\usepackage{fullpage}
\usepackage{framed}
\usepackage{xspace}
\addtolength{\textwidth}{0.1in}
\addtolength{\textheight}{0.6in}

\usepackage{amsmath,amsthm,amssymb}
\newcommand{\remove}[1]{}

\newtheorem{thm}{Theorem}[section]
\newtheorem{claim}[thm]{Claim}

\newtheorem{lemma}[thm]{Lemma}
\newtheorem{dfn}[thm]{Definition}

\newtheorem{cor}[thm]{Corollary}


\def\_{\,\,\,\,\,}

\newcommand{\w}{w}

\newcommand{\B}{\{0,1\}}

\newcommand{\eps}{\epsilon}


\def\simpk{\textrm SIMPLE $\mathrm{k}$-PATH\xspace}

\def\minwt{{\textrm MIN-WT SIMPLE $\mathrm{k}$-PATH}\xspace}
\def\stminwt{{\textrm MIN-WT SIMPLE $\mathrm{(s,t)}$-$\mathrm{k}$-PATH}\xspace}
\def\ktsp{$\mathrm{k}$ \textrm{-TSP}\xspace}
\def\kstroll{$\mathrm{k}$-\textrm{STROLL}\xspace}
\def\ktour{$\mathrm{k}$-\textrm{TOUR}\xspace}

\def\kpath{{$\mathrm{k}$-path}\xspace}
\def\stpath{$\mathrm{(s,t)}$-\textrm{path}\xspace}

\def\stkpath{$\mathrm{(s,t)}$-$\mathrm{k}$-path\xspace}

\def\poly{n^{O(1)}}

\newtheorem{fact}[thm]{Fact}

\newcommand{\xorL}{ L_{\oplus}}
\newcommand{\F}{{\mathbb F}}

\newcommand{\size}{\mathrm{size}}

\newcommand{\set}[1]{\{#1\}}

\newcommand{\M}{{\cal M}}
\newcommand{\N}{{\cal N}}
\newcommand{\A}{{\cal A}}

\newcommand{\itab}[1]{\hspace{0em}\rlap{#1}}
\newcommand{\tab}[1]{\hspace{.12\textwidth}\rlap{#1}}
\newcommand{\htab}[1]{\hspace{.05\textwidth}\rlap{#1}}

\newcommand{\sett}[2]{\set{#1}_{#2}}
\title{Relations between automata and the simple $k$-path problem}
\author{Ran Ben-Basat\hspace{.3in}  Ariel Gabizon\thanks{The research leading to these results has received funding from the European Community's Seventh Framework Programme (FP7/2007-2013) under grant agreement number 257575.}\hspace{.3in} \\
 Technion, Haifa}
\begin{document}
\maketitle
\begin{abstract}Let $G$ be a directed graph on $n$ vertices.
A \kpath in $G$ is a path $p=v_1\to \ldots \to v_k$ in $G$.
Given an integer $k\leq n$, the \simpk problem asks whether there exists a \emph{simple} \kpath
 in $G$.
In case $G$ is weighted, the \minwt\;problem asks
for  a simple \kpath in $G$ of minimal weight.
The fastest currently known deterministic algorithm for \minwt by Fomin, Lokshtanov and Saurabh \cite{FLS13} runs in time $O(2.851^k\cdot \poly\cdot \log W)$ for graphs with integer weights in the range $[-W,W]$.
This is also the best currently known deterministic algorithm for \simpk- where the running time is the same without the $\log W$ factor.

We define $L_k(n)\subseteq [n]^k$ to be the set of words of length $k$
whose symbols are all distinct.
We show that an explicit construction of a non-deterministic automaton (NFA) of size $f(k)\cdot \poly$
for $L_k(n)$ implies an algorithm of running time $O(f(k)\cdot \poly\cdot \log W)$ for \minwt
when the weights are non-negative \emph{or} the constructed NFA is acyclic as a directed graph.
We show that the algorithm of Kneis et al. \cite{KMRR06} and its derandomization by Chen et al.\cite{CLSZ07} for \simpk can be used to construct an acylic NFA for $L_k(n)$ of size $O^*(4^{k+o(k)})$.

We show, on the other hand, that any NFA for $L_k(n)$ must have size at least $2^k$.
We thus propose closing this gap and determining the smallest NFA for $L_k(n)$ as an interesting open problem
that might lead to faster algorithms for \minwt.

We use a relation between \simpk  and \emph{non-deterministic xor automata} (NXA)
to give another direction for a deterministic algorithm with running time $O^*(2^k)$ for
\simpk.
\end{abstract}

\section{Introduction}
Let us recall the classic \emph{Travelling Salesman Problem} (TSP):
Given a complete undirected weighted graph $G$ on $n$ vertices
we wish to find a cycle of minimal weight passing through all vertices.
A parameterized version of this problem is sometimes called {\ktsp} (cf. \cite{BCKLMM07}). Here the salesman wants to visit only $k$ out of the $n$ cities (he does not insist on \emph{which} $k$)  while minimizing the total travel time.\footnote{There seem to be inconsistencies in the literature on whether \ktsp insists on
a cycle, or a fixed starting point.}
Note that in general the optimal route may \emph{not} be simple.
It will be convenient to formally define a more general problem, where the desired end point and
starting point are given as part of the input and the graph can be directed.
What we get is a problem referred to in \cite{BC13} as the \kstroll problem.

\begin{framed}
\noindent
\scriptsize
\textbf{\kstroll}\\
\normalsize

\itab{Input:} \tab{Directed graph $G=(V,E)$, vertices $s,t\in V$ , weight function $\w:E\to {\mathbb R}$.}

\itab{Parameter:} \tab{$k\in {\mathbb N}$.}

\itab{Problem:} \tab{Find a minimal weight path from $s$ to $t$ that visits at least $k$ distinct vertices}\\
 \tab{}\htab{}\tab{(counting $s$ and $t$).}
\end{framed}

The \ktour problem \cite{BC13} is a special case of \kstroll where $s=t$.
%
%
%
%
%

 A related problem that has received much attention is that of determining whether there
exists a \emph{simple} \kpath in a graph, and if
so returning such a path of minimal weight.
Here we define a \kpath in a graph to be a path of the form $v_1\to \ldots \to v_k$ (i.e., the number of \emph{vertices} in the path is $k$).
Let us define the unweighted and weighted versions of this problem.

\begin{framed}
\noindent
\scriptsize
\textbf{\simpk}\\
\normalsize

\itab{Input:} \tab{Directed graph $G=(V,E)$.}

\itab{Parameter:} \tab{$k\in {\mathbb N}$.}

\itab{Problem:} \tab{Determine if there exists a simple $k$-path in $G$, and if so return such a path.}
\end{framed}

\begin{framed}
\noindent
\scriptsize
\textbf{\minwt}\\
\normalsize

\itab{Input:} \tab{Directed graph $G=(V,E)$, weight function $w:E\to {\mathbb R}$.}

\itab{Parameter:} \tab{$k\in {\mathbb N}$.}

\itab{Problem:} \tab{Determine if there exists a simple $k$-path in $G$, and if so return such a path
of}\\\tab{}\htab{}\tab{minimal weight.}
\end{framed}

For vertices $s,t\in V$, an \stkpath is a \kpath beginning in $s$ and
ending in $t$.
Let us now also define \stminwt to be the version of \minwt where we give as additional
input vertices $s,t\in V$ and ask for a simple \stkpath of minimal weight.
Though an optimal solution for \kstroll is not necessarily a \emph{simple} path,
the problem is easily reducible to \stminwt:
Given $G$ compute the complete graph $G'$ on the same set of vertices, where the weight of the
directed edge $(u,v)$ is the weight of the minimal weight path between $u$ to $v$ in $G$.
A simple \stkpath $p'$ in $G'$ of minimal weight corresponds to an \stpath $p$ in $G$
passing through $k$ distinct vertices of minimal weight: Replace an edge $(u,v)$ in $p'$ by the shortest path
from $u$ to $v$ in $G$.
This connection gives more motivation for solving \minwt (it seems that all
known algorithms for \minwt can be adapted to solve \stminwt with the same running time).

The purpose of this paper is to propose a direction for obtaining faster deterministic algorithms
for \minwt via a connection to automata theory.
\subsection{Previous results on \minwt and our results}

Alon, Yuster and Zwick \cite{AYZ08} gave the first deterministic algorithm for \minwt running in time
$O(2^{O(k)}\cdot\poly\cdot \log W)$, where we assume the weights of the graph are integers
in the range $[-W,W]$.

The current state of the art is by Fomin, Lokshtanov and Saurabh \cite{FLS13}
giving a deterministic algorithm running in time  $O(2.851^k\cdot \poly\cdot \log W)$.

\begin{dfn}[The language $L_k(n)$]\label{dfn:L_k}
Fix positive integers $k\leq n$.
We define $L_k(n)\subseteq [n]^k$ to be the
set of words $w_1\cdots w_k \in [n]^k$ such
that $w_1,\ldots,w_k$ are all distinct.
\end{dfn}

Our main result is to show that a non-deterministic finite automaton (NFA) for the language
$L_k(n)$ implies an algorithm for \minwt whose running time is close to the size of the NFA.
In fact, Theorem \ref{thm-general_automat_alg} in Section \ref{sec:aut_algorithm} gives a general connection between
constructing compact NFAs and finding minimal-weight paths satisfying a certain constraint (in our case
the constraint is being simple of length $k$).

We state the result formally for \stminwt.
Note that \minwt can be easily reduced to \stminwt by
adding a start vertex $s$ that has outgoing edges to all vertices,
and a target vertex $t$ that has ingoing edges from all vertices.

The following theorem uses notation regarding NFAs from Definition \ref{dfn:NFA}.
We note in particular that by the \emph{size} of an NFA we mean
the total number of states \emph{and} transitions it contains.
\begin{thm}\label{thm:NFA_to_ksimpAlg}
Fix integers $k\leq n$. Suppose we can construct an NFA $M$ of size $s$
with $L(M)= L_k(n)$ in time $O(s)$.
Then we can solve \stminwt on graphs with $n$ vertices and non-negative
integer weights of size at most $W$ in time $O(s\cdot\log s\cdot  n^2\cdot \log W)$.

In case $M$ is a directed acyclic graph we can solve \stminwt on graphs with $n$ vertices
and integer weights in the range $[-W,W]$ in time
$O(s\cdot n^2\cdot  \log W)$.
\end{thm}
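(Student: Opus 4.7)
The plan is to build a product automaton $H$ that runs $M$ on the sequence of vertices visited by a walk in $G$, then reduce \stminwt to a single-source shortest-path computation in $H$. The vertex set of $H$ is $V(G) \times Q(M)$, where $Q(M)$ denotes the state set of $M$. For every edge $(u,v) \in E(G)$ of weight $w(u,v)$ and every transition $q \xrightarrow{v} q'$ of $M$, I include a directed edge $((u,q),(v,q'))$ in $H$ of weight $w(u,v)$. The \emph{start vertices} of $H$ are those $(s,q')$ with $q_0 \xrightarrow{s} q'$ in $M$ from some initial state $q_0$, and the \emph{accept vertices} are $(t,q_f)$ with $q_f$ a final state of $M$; a super-source and super-sink attached by zero-weight edges let us handle all starts and accepts in a single call.

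The correctness step establishes a weight-preserving bijection between paths in $H$ from a start vertex to an accept vertex and length-$k$ walks $v_1 v_2 \cdots v_k$ in $G$ with $v_1=s$ and $v_k=t$ such that $v_1 v_2 \cdots v_k \in L(M)$. Since $L(M) = L_k(n)$ is precisely the set of length-$k$ words over $[n]$ whose symbols are all distinct, the $G$-side walks in the bijection are exactly the simple $(s,t)$-$k$-paths in $G$. Hence a minimum-weight start-to-accept path in $H$ solves \stminwt.

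For the complexity, I use the crude estimates $|V(H)| \leq n\cdot s$ and $|E(H)| \leq |E(G)| \cdot (\text{transitions of }M) \leq n^2 \cdot s$; the graph $H$ is built in time proportional to its size given $M$. For non-negative integer weights in $[0,W]$, Dijkstra with a binary heap on $H$ runs in $O((|V(H)|+|E(H)|)\log|V(H)|) = O(n^2 s \log(ns))$ comparisons, each costing $O(\log W)$ arithmetic time; using $s \geq n$ (any NFA for $L_k(n)$ must have at least one transition per symbol in $[n]$), this is $O(s\cdot\log s\cdot n^2 \cdot \log W)$, matching the first bound. When $M$ is acyclic as a directed graph, so is $H$, since any cycle in $H$ projects via the $Q(M)$-coordinate to a closed walk in $M$; topologically sorting $H$ and relaxing edges in order then handles arbitrary integer weights in $[-W,W]$ in $O((|V(H)|+|E(H)|)\log W) = O(s\cdot n^2\cdot\log W)$ time, matching the second bound.

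The main obstacle I anticipate is correctly aligning $M$'s symbol-reading with $G$'s edge-traversal: $M$ must read the first symbol $s$ \emph{before} any $G$-edge is traversed, while a state of the product already sits at a $G$-vertex. I resolve this by folding the very first $M$-transition on $s$ into the definition of the start vertices, as above. Beyond that, the remaining work is verifying the bijection in both directions and confirming the two size/runtime estimates.
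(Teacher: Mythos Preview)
Your proposal is correct and takes essentially the same approach as the paper: the paper also forms the product of $M$ with a simple ``path automaton'' $M(G,s,t)$ that encodes all $(s,t)$-walks in $G$, weights the product by $G$'s edge weights, and runs Dijkstra (non-negative case) or a DAG shortest-path computation (acyclic case) on the result. The only cosmetic differences are that the paper packages the product as an intersection of NFAs and uses Fibonacci-heap Dijkstra, whereas you build the product graph directly and use a binary heap together with the observation $s\ge n$ to absorb the extra $\log n$ into $\log s$.
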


In Section \ref{sec:4^k} we show that the algorithms of Kneis et al. \cite{KMRR06} and  Chen et al.\cite{CLSZ07} for \simpk can be used to construct an acylic NFA for $L_k(n)$ of size $4^k\cdot k^{O(\log^2 k)}$ in time $O(4^k\cdot k^{O(\log^2 k)})$.
In Section \ref{section:lb} we show that any NFA for $L_k(n)$ must have at least $2^k$ states.
We thus find closing this gap to be an interesting problem that could lead to a faster deterministic
algorithm for \minwt.

A \emph{non-deterministic XOR automata} (NXA) is an NFA where the
acceptance condition is that a word has an \emph{odd} number of accepting paths, rather than
at least one.
In Section \ref{sec:XOR} we show that a small set of NXAs of size $O^*(2^k)$
can be constructed such that the union of their languages is $L_k(n)$.
This construction is in fact related to a randomized algorithm for \simpk
of Abasi and Bshouty \cite{AB13}.
We use this to give an $O^*(8^k)$ randomized algorithm for
\simpk.
The algorithm could be derandomized and its running time improved
potentially to $O^*(2^k)$ if a certain set of matrices
could be explicitly constructed and a faster algorithm for checking the
emptiness of an NXA were devised.
See Section \ref{sec:XOR} for details.
\section{Preliminaries}

We formally define non-deterministic automata.
It will be convenient to allow the transitions of the automaton to be weighted.
\begin{dfn}[NFA]\label{dfn:NFA}
A non-deterministic finite automaton (NFA) M over alphabet $\Sigma$ is a labeled directed graph
 $M = <Q,\Delta,q_0,F>$ where
 \begin{itemize}
  \item $Q$ is the set of vertices. We refer to the elements of $Q$ as `states'.
\item $\Delta$ is the set of edges. We refer to elements of $\Delta$ as `transitions' and suggestively use the notation $(u\to v)$ rather than $(u,v)$.
\item Each transition $e\in \Delta$ is labeled with an element of $\Sigma$.
\item $q_0$ is an element of $Q$ which is the `start state' of $M$.
\item $F\subseteq Q$ is the set of `accepting states'.
\end{itemize}
At times $M$ will be a \emph{weighted} graph.
That is, we will also have a weight function $\w:\Delta\to {\mathbb R}$.
For a word $w=w_1\cdots w_t \in [n]^t$, we define $M(w)\subseteq V$
to be the `subset of states reach by $w$' in the usual way for NFAs.
One subtlety: If while reading a word we reach a state where we cannot progress
by reading the next symbol, this run is rejected and the state we are at
is \emph{not} added to $M(w)$.
We define the \emph{language of $M$}, denoted $L(M)$ by
\[L(M) \triangleq \set{w\in [n]^*| M(w)\cap F \neq \emptyset}.\]

It will be convenient to define the \emph{size of $M$}, denoted $\size(M)$,
as the sum of the number of states and transitions in $M$.
That is, $\size(M)\triangleq |Q|+|\Delta|$.

Finally, we say $M$ is \emph{acyclic} if it is acyclic as a directed graph.
\end{dfn}

\begin{dfn}[Intersection NFA]\label{dfn:intersectionNFA}
Given NFAs $M_1=<Q_1,\Delta_1,q^1_0,F_1>$
and $M_2=<Q_2,\Delta_2,q^2_0,F_2>$ over the same alphabet $\Sigma$
we define the \emph{intersection NFA}
\[M_1\cap M_2\triangleq <Q_1\times Q_2,\Delta,<q^1_0,q^2_0>,F_1\times F_2>\]
over $\Sigma$, where the set of transitions $\Delta$ is defined as follows.
For every pair of transitions $(u_1\to v_1)\in \Delta_1$ and $(u_2\to v_2)\in \Delta_2$
that are \emph{both labeled by the same element $a\in \Sigma$},
we have a transition $(<u_1,u_2>\to <v_1,v_2>)\in \Delta$ labeled $a$.
\end{dfn}
\noindent
It is known that
\begin{fact}\label{fact:intersectionNFA}
$L(M_1\cap M_2) = L(M_1)\cap L(M_2)$.
\end{fact}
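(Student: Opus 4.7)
The plan is to establish the two inclusions $L(M_1\cap M_2)\subseteq L(M_1)\cap L(M_2)$ and $L(M_1)\cap L(M_2)\subseteq L(M_1\cap M_2)$ by exhibiting a bijection between accepting runs of $M_1\cap M_2$ on a word $w$ and pairs of accepting runs of $M_1$ and $M_2$ on $w$. The crux of the argument is already baked into Definition \ref{dfn:intersectionNFA}: a transition in $M_1\cap M_2$ labeled $a$ exists precisely when both components have an $a$-labeled transition in the corresponding coordinate. So the proof reduces to unwinding what a run in the product graph looks like.

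For the forward inclusion, I would fix $w=w_1\cdots w_t\in L(M_1\cap M_2)$ and take an accepting run $\langle u_0,v_0\rangle \to \langle u_1,v_1\rangle \to \cdots \to \langle u_t,v_t\rangle$ where $\langle u_0,v_0\rangle=\langle q_0^1,q_0^2\rangle$ and $\langle u_t,v_t\rangle\in F_1\times F_2$. By the construction in Definition \ref{dfn:intersectionNFA}, each step $\langle u_i,v_i\rangle\to\langle u_{i+1},v_{i+1}\rangle$ labeled $w_{i+1}$ arises from transitions $(u_i\to u_{i+1})\in\Delta_1$ and $(v_i\to v_{i+1})\in\Delta_2$ both labeled $w_{i+1}$. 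Projecting on each coordinate therefore yields an accepting run of $M_1$ ending in $u_t\in F_1$ and an accepting run of $M_2$ ending in $v_t\in F_2$, so $w\in L(M_1)\cap L(M_2)$.

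For the reverse inclusion, given $w\in L(M_1)\cap L(M_2)$, pick accepting runs $q_0^1=u_0\to u_1\to\cdots\to u_t\in F_1$ of $M_1$ and $q_0^2=v_0\to v_1\to \cdots\to v_t\in F_2$ of $M_2$, both reading the same symbols $w_1,\ldots,w_t$. By the defining clause of Definition \ref{dfn:intersectionNFA}, for each $i$ the pair of transitions, both labeled $w_{i+1}$, induces a transition $\langle u_i,v_i\rangle\to\langle u_{i+1},v_{i+1}\rangle$ in $\Delta$. The resulting path in $M_1\cap M_2$ starts at $\langle q_0^1,q_0^2\rangle$ and ends at $\langle u_t,v_t\rangle\in F_1\times F_2$, witnessing $w\in L(M_1\cap M_2)$.

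There is essentially no obstacle here; the only mild subtlety worth flagging is the convention (noted in Definition \ref{dfn:NFA}) that a run which gets stuck mid-word is discarded. This is handled automatically because the two inclusions above only construct runs that make it all the way through the word. I would therefore keep the write-up to a compact two-paragraph argument, basically invoking the bijection between product-runs and pairs of component-runs with matching labels.
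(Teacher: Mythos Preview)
Your argument is correct and is exactly the standard product-construction proof. Note, however, that the paper does not actually supply a proof of this fact: it is stated as a known result (prefaced by ``It is known that''), so there is nothing in the paper to compare your write-up against.
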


\section{Finding automata-constrained shortest paths}\label{sec:aut_algorithm}
The purpose of this section is to establish a general connection between
algorithms for finding minimal weight paths satisfying a certain constraint
and NFAs representing the constraint.

The following definition and straightforward lemma formally convert a graph
into an automaton accepting the paths of the graph.
\begin{dfn}[The path automaton]\label{dfn:path_aut}
Let $G=<V,E>$ be a directed graph.
Fix $s,t \in V$.

The NFA
 \[M(G,s,t)\triangleq <Q=V\cup\set {q_0},\Delta= E\cup\set{(q_0,s)},q_0, F=\set{t}>\]
with alphabet $\Sigma = V$ is defined with the following labeling of transitions.
The transition $(q_0 \to s)$ will be labeled $s$. For each $(u,v)\in E$ the  transition $(u \to v)$ is labeled with the source vertex $u\in V$ of the edge.
\end{dfn}
\begin{lemma}
Let $G=<V,E>$ be a directed graph.
Fix $s,t \in V$.
Then $L(M(G,s,t))$ is precisely the set of words $p=s\cdot v_1\cdots v_m\cdot t$
such that $s\to v_1\to \ldots\to v_m \to t$ is a path from $s$ to $t$ in $G$.
\end{lemma}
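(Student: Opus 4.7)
My plan is to prove both inclusions by a direct correspondence between accepting runs of $M(G,s,t)$ and $s$-$t$ paths in $G$. Since $M(G,s,t)$ is essentially $G$ itself with a fresh start state $q_0$ and a single initial transition attached, this is a routine bookkeeping argument and I do not anticipate a real obstacle; the only point requiring a moment of care is the degenerate case where the path is a single edge $s\to t$ (that is, $m=0$).

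For the inclusion $\supseteq$, I would take any path $s\to v_1\to\ldots\to v_m\to t$ in $G$ and exhibit an accepting run of $M(G,s,t)$ on the word $s\cdot v_1\cdots v_m\cdot t$. Start from $q_0$, use the unique transition $(q_0\to s)$ to consume the first symbol, and then chain the transitions corresponding to the successive edges of the path. The labelling rule from Definition~\ref{dfn:path_aut} guarantees that each of these transitions consumes the intended next symbol, and the run terminates in $t\in F$, so the word is accepted.

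For the inclusion $\subseteq$, I would take an arbitrary $w\in L(M(G,s,t))$ together with a witnessing accepting run and read off a path. Since the only transition leaving $q_0$ is $(q_0\to s)$, the first symbol of $w$ must be $s$ and after the first step the run sits in state $s\in V$. No transition ever re-enters $q_0$, so every subsequent transition lies in $E$ and is thus an edge of $G$; the labelling pins down which such transition is taken at each step from the current state and the next input symbol. Hence the sequence of states visited after $q_0$ traces a path in $G$ whose vertex sequence reproduces $w$ symbol by symbol, and acceptance forces the path to terminate at $t$. So $w$ has the required form $s\cdot v_1\cdots v_m\cdot t$ for some $s$-$t$ path in $G$, completing the argument.
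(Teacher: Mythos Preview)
Your verification is correct and is precisely the routine two-inclusion bookkeeping one would expect here; the paper itself states this lemma without proof, treating it as an immediate consequence of the construction in Definition~\ref{dfn:path_aut}, so there is no alternative approach to compare against.

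One small point worth flagging: your claim that the labelling ``pins down'' each subsequent transition, and more generally that the state sequence of an accepting run reproduces the input word symbol by symbol, is correct only if the edge-transition $(u\to v)$ carries the label of its \emph{target} $v$. The text of Definition~\ref{dfn:path_aut} literally says ``source vertex $u$'', which appears to be a slip in the paper---under that convention the accepted words would have the shape $s\cdot s\cdot v_1\cdots v_m$ rather than $s\cdot v_1\cdots v_m\cdot t$, and the lemma as stated would fail. With the intended target-labelling your argument goes through verbatim.
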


\begin{dfn}[Paths accepted by an NFA]\label{dfn-accepted_paths}
Fix an NFA $M$ with alphabet $\Sigma $,
and a directed graph $G=<\Sigma,E>$.
Let $p=v_1\to v_2\to \ldots..\to v_t$ be a (directed) path in $G$.
Identify $p$ with the word $v_1\cdots v_t\in \Sigma^t$.
We say the path $p$ is \emph{accepted by $\mathrm{M}$} if $p\in L(M)$.
Or in words, running the NFA $M$ with the word $p$ can end in an accepting state.
\end{dfn}

The following theorem states that if we have an NFA of a certain size
capturing a certain constraint on a path, we have
an algorithm for finding the shortest path satisfying the constraint
whose running time is similar to the size of the NFA.
\begin{thm}\label{thm-general_automat_alg}
Fix any NFA $M$ with alphabet $\Sigma=[n]$.
There is an algorithm that, given as input a directed weighted graph $G=<[n],E,\w>$ with integer weights and vertices $s,t \in [n]$,
returns an \stpath in $G$ that is accepted by $M$ of minimal weight.
The running time of the algorithm is at most $O(\size^2(M) \cdot n^3)$.
The running time can be improved to
\begin{itemize}
\item  $O(\size(M)\cdot \log(\size(M)) \cdot n^2)$ when $G$ only contains non-negative weights.
\item $O(\size(M)\cdot n^2)$ when $M$ is acylic.
\end{itemize}
All running times assume $O(1)$ arithmetic operations on weights
and without this assumption require an additional $\log W$ factor
when the weights are in the range $[-W,W]$.
\end{thm}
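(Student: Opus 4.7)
The plan is to reduce to a weighted shortest-path computation on an auxiliary graph built by intersecting $M$ with an automaton that describes the paths of $G$. Concretely, first form the path automaton $N \triangleq M(G,s,t)$ from Definition~\ref{dfn:path_aut}; by the preceding lemma $L(N)$ is exactly the set of $s$-$t$ paths in $G$ encoded as words over $[n]$. Then build the intersection $M' \triangleq M\cap N$ (Definition~\ref{dfn:intersectionNFA}). By Fact~\ref{fact:intersectionNFA}, $L(M')$ is exactly the set of $s$-$t$ paths in $G$ accepted by $M$, so accepting runs of $M'$ (i.e.\ paths in the underlying directed graph of $M'$ from $\langle q_0^M, q_0^N\rangle$ to some accepting state $\langle f, t\rangle$) are in bijection with the candidate paths we wish to minimize over.

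Next I would transfer the weights of $G$ onto the transitions of $M'$: each transition of $M'$ has its second coordinate equal either to the dummy transition $(q_0^N\to s)$ of $N$ (weight $0$) or to an edge $(u,v)\in E$ of $G$ (weight $\w(u,v)$). Under this weighting, the total weight of an accepting run in $M'$ coincides with the weight in $G$ of the corresponding $s$-$t$ path. Hence it suffices to compute a minimum-weight path in the weighted directed graph underlying $M'$ from $\langle q_0^M, q_0^N\rangle$ to $F_M \times \set{t}$, and to recover the corresponding path in $G$ by reading the second coordinates of the transitions along it.

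For the running-time bounds, $N$ has $n+1$ states and at most $n^2+1$ transitions, so by Definition~\ref{dfn:intersectionNFA} we get $|Q(M')|\le \size(M)\cdot(n+1)$ and $|\Delta(M')|\le \size(M)\cdot(n^2+1)$; in particular $\size(M')=O(\size(M)\cdot n^2)$. For general integer weights, Bellman--Ford on $M'$ runs in $O(|Q(M')|\cdot|\Delta(M')|)=O(\size^2(M)\cdot n^3)$. For non-negative weights, Dijkstra with a binary heap gives $O(|\Delta(M')|\log|Q(M')|)=O(\size(M)\cdot\log\size(M)\cdot n^2)$. The $\log W$ factor in every case accounts for arithmetic on weights of magnitude at most $W$.

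The only non-routine point is the acyclic case. I would verify that if $M$ is acyclic then so is $M'$: every cycle in $M'$ is a closed walk of positive length, and projecting to the first coordinate yields a closed walk of positive length in the directed graph of $M$; but a DAG admits no such walks, since every walk is strictly increasing with respect to a topological order. Hence $M'$ is a DAG and shortest paths in it can be computed by a single topological-order dynamic program in $O(\size(M'))=O(\size(M)\cdot n^2)$ time, yielding the stated bound. Once this acyclicity observation is in place, the remainder of the argument is essentially bookkeeping built on top of Fact~\ref{fact:intersectionNFA}.
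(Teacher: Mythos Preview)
Your proposal is correct and follows essentially the same route as the paper: intersect $M$ with the path automaton $M(G,s,t)$, weight the product by the edge weights of $G$, and run a shortest-path algorithm (Bellman--Ford, Dijkstra, or a topological-order DP according to the case). One small discrepancy: with a \emph{binary} heap, Dijkstra costs $O(|\Delta(M')|\log|Q(M')|)=O(\size(M)\cdot n^2\cdot(\log\size(M)+\log n))$, which only matches the stated $O(\size(M)\cdot\log\size(M)\cdot n^2)$ when $n=\mathrm{poly}(\size(M))$; the paper uses the Fredman--Tarjan (Fibonacci-heap) bound $O(|E|+|V|\log|V|)$, under which the extra $\log n$ is absorbed into the $n^2$ term.
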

\begin{proof}
First note that we can convert $M$ to an NFA with one accepting state
while at most doubling its size. Let us assume from now on that $M$ indeed has a unique accepting state.
Let $M(G,s,t)$ be the NFA from Definition \ref{dfn:path_aut}.
We construct the intersection NFA $N\triangleq M\cap M(G,s,t)$ as in Definition \ref{dfn:intersectionNFA}.
Now we add weights to the transitions according to $G$.
More precisely, transitions $(<q_1,u>\to <q_2,v>)\in \Delta$ with $(u,v)\in E$ will be given
weight $\w(u,v)$.
All other transitions (simply ones where the second coordinate shifts from the start state of $M(G,s,t)$ to $s$)
will be given weight $0$.

Note that when looking at $N$ as a weighted directed graph, the paths from its start to accept state
exactly correspond to the $\mathrm{(s,t)}$-paths in $G$ accepted by $M$. (It is possible that a certain
\stpath in $G$ corresponds to many accepting paths in $N$).
Now note that the weight of any accepting path in $N$ of a word $s\cdot v_1\cdots v_m\cdot  t$ is
the same as the weight of the path $s\to v_1\to \ldots\to v_m\to t$ in $G$.
Thus, running a shortest path algorithm on $N$ from the start to accept state
will give us an \stpath in $G$ that is accepted by $M$ and is of minimal weight among the $\mathrm{(s,t)}$-paths in $G$ accepted by $M$.
Note that $N$ has at most $\size(M)\cdot (n+1)$ vertices
and at most $\size(M)\cdot n^2$ edges.
Running the Bellman-Ford algorithm would give us
time $O(|V|\cdot |E|) =O(\size^2(M)\cdot n^3)$.
In case $G$ has non-negative weights we can use Fredman and Tarjan's implementation of Dijkstra's algorithm
\cite{FT87}
to get time $O(|E|+ |V|\cdot \log |V|) = O(\size(M)\cdot \log (\size(M))\cdot n^2)$.
In case $M$ is acyclic so is $N$ and we can use toplogical sort to
get time $O(\size(M)\cdot n^2)$.
\end{proof}

\noindent Theorem \ref{thm:NFA_to_ksimpAlg} now follows from Theorem \ref{thm-general_automat_alg}
by considering NFAs whose language is $L_k(n)$.

\section{A Lower bound for the NFA size of $L_k(n)$}\label{section:lb}
The following theorem of Gliaster and Shallit \cite{GS96} gives a method to lower bound
the NFA size of a language.
\begin{thm}\label{thm:NFAlb_method}
Fix a language $L\subseteq [n]^*$.
Suppose we have elements $x_1,\ldots,x_t,y_1,\ldots,y_t \in [n]^*$
such that
\begin{itemize}
\item For all $i\in [t]$, $x_i\cdot y_i \in L$.
\item For all $i\neq j\in [t]$, $x_i\cdot y_j\notin L$.
\end{itemize}
Then any NFA for $L$ has at least $t$ states.
\end{thm}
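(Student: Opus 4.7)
The plan is to use a standard fooling-set argument. Fix an NFA $M$ with $L(M) = L$. For each $i \in [t]$, since $x_i \cdot y_i \in L$, there exists an accepting run of $M$ on the word $x_i \cdot y_i$; along that run, let $q_i$ denote the state reached immediately after $M$ has consumed the prefix $x_i$. I would then show that the $t$ states $q_1, \ldots, q_t$ are pairwise distinct, which yields $|Q| \geq t$ and thus the desired lower bound.

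For the distinctness step, I would argue by contradiction: suppose $q_i = q_j$ for some $i \neq j$. I would then splice the two accepting runs at this common state. Specifically, take the prefix of the accepting run on $x_i \cdot y_i$ that reads $x_i$ and ends in $q_i$, and concatenate it with the suffix of the accepting run on $x_j \cdot y_j$ that starts at $q_j = q_i$, reads $y_j$, and ends in an accepting state of $F$. The concatenation is a well-defined run of $M$ on the word $x_i \cdot y_j$ that terminates in $F$, so $x_i \cdot y_j \in L(M) = L$, contradicting the second hypothesis of the theorem.

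There is essentially no obstacle; the only mild subtlety is the convention from Definition \ref{dfn:NFA} that a run which cannot progress on some symbol is rejected and its final state is not added to $M(w)$. This is not an issue here because both spliced segments are, by hypothesis, taken from runs that already succeed in consuming their full inputs ($x_i$ and $y_j$ respectively), so every transition used in the glued run is present in $\Delta$ and labeled correctly. Hence the glued run genuinely witnesses acceptance of $x_i \cdot y_j$, completing the proof.
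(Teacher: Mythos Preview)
Your proposal is correct and is precisely the standard fooling-set argument due to Glaister and Shallit. Note, however, that the paper does not supply its own proof of this theorem: it simply cites \cite{GS96} and uses the statement as a black box, so there is no in-paper proof to compare against. Your write-up faithfully reproduces the original argument and handles the one subtlety (stuck runs) cleanly.
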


\begin{thm}\label{thm:lb_for_Lk}
Fix any integers $k\leq n$.
Then any NFA for $L_k(n)$ has at
least $2^k$ states.
\end{thm}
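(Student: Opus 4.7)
The plan is to invoke Theorem~\ref{thm:NFAlb_method} with $t=2^{k}$, so I need to produce $2^{k}$ fooling pairs $(x_S,y_S)$ indexed by subsets $S\subseteq[k]$ (which is a valid alphabet since $k\leq n$ implies $[k]\subseteq[n]$).

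For each $S\subseteq[k]$ let $x_S\in[k]^{|S|}$ be the word obtained by listing the elements of $S$ in increasing order, and let $y_S\in[k]^{k-|S|}$ be the word obtained by listing the elements of $[k]\setminus S$ in increasing order. Then $x_S\cdot y_S$ is a length-$k$ word whose symbols are exactly the elements of $[k]$, each appearing once; in particular all symbols are distinct, so $x_S\cdot y_S\in L_k(n)$. This verifies the first Glaister--Shallit condition.

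For the second condition I split into two subcases when $S\neq T$. If $|S|\neq|T|$ then $|x_S\cdot y_T|=|S|+(k-|T|)\neq k$, and since $L_k(n)\subseteq[n]^k$ consists only of length-$k$ words, $x_S\cdot y_T\notin L_k(n)$. If instead $|S|=|T|$ but $S\neq T$, then $S\not\subseteq T$ (two equal-size finite sets cannot be in strict inclusion), so there is some $a\in S\setminus T$; this $a$ appears in $x_S$ by construction of $x_S$, and also in $y_T$ since $a\notin T$ means $a\in[k]\setminus T$. Hence $x_S\cdot y_T$ contains the repeated symbol $a$ and therefore lies outside $L_k(n)$. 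Applying Theorem~\ref{thm:NFAlb_method} with these $2^k$ pairs yields the claimed lower bound.

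There is no real obstacle here; the only thing to spot is the right combinatorial family to feed into Glaister--Shallit. The mild subtlety is that one must separately handle the case of unequal $|S|,|T|$ (where the failure is a length mismatch) and the case of equal but distinct $S,T$ (where the failure is a repeated letter forced by the symmetric difference), but both cases are immediate once the subset-indexed construction is written down.
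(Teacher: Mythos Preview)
Your proof is correct and follows essentially the same approach as the paper: both construct the fooling pairs $(x_S,y_S)$ indexed by subsets $S\subseteq[k]$, with $x_S$ listing $S$ and $y_S$ listing its complement, and then apply the Glaister--Shallit bound (Theorem~\ref{thm:NFAlb_method}). Your write-up is in fact more careful than the paper's, explicitly distinguishing the length-mismatch case $|S|\neq|T|$ from the repeated-symbol case $|S|=|T|$, whereas the paper simply asserts that $x_S\cdot y_T\notin L_k(n)$ for $S\neq T$.
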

\begin{proof}
For every subset $S=\set{i_1,\ldots,i_d}\subseteq [k]$,
let $x_S\in [n]^k$ be the word $x_S=i_1\cdots i_d$.
For every $S\subseteq [k]$ define $y_S = x_{\bar{S}}$.
It is clear that for every $S\subseteq [k]$,
$x_S\cdot y_S \in L$.
And for every $S\neq T \subseteq [k]$
$x_S\cdot y_T \notin L$.
Now the claim follows from Theorem \ref{thm:NFAlb_method}.
\end{proof}


\section{NFA construction for $L_k(n)$}\label{sec:4^k}

In this section we give an explicit construction of an NFA for
the language $L_k(n)$ of size $O^*(4^{k+o(k)})$.
The NFA construction and analysis closely correspond to the algorithm
for \simpk of \cite{KMRR06} and its derandomization
using universal sets by \cite{CLSZ07}.
For this purpose we now define universal sets.

\begin{dfn}[$(n,k)$-universal set] \label{dfn:universal_set}
A set of strings $U\subseteq \B^n$ is an \emph{$(n,k)$-universal set} if
for every $S\subseteq [n]$ of size $k$, and every $a\in \B^k$ we have $x\in T$
such that $x|_S=a$.
Equivalently, an $(n,k)$-universal set is a set $U$ of subsets of $[n]$
such that for every $S\subseteq [n]$ of size $k$ and every $S'\subseteq S$
we have $T\in U$ such that $T\cap S = S'$.
\end{dfn}


\noindent
Naor, Schulman and Srinivasan \cite{NSS95} gave an almost
optimal construction of universal sets.

\begin{claim}\label{clm:univsetexplicit}[\cite{NSS95}]
Fix integers $k\leq n$. There is a deterministic algorithm
of running time $O(2^k\cdot k^{O(\log k)}\cdot \log n )$ that constructs
an $(n, k)$-universal set of size $2^k\cdot k^{O(\log k)} \cdot \log n$.
\end{claim}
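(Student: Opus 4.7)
The plan is to follow the classical Naor--Schulman--Srinivasan template, which combines a hash/splitter family with a direct construction on a small domain. I would first reduce the task of building an $(n,k)$-universal set to building one over a domain of size polynomial in $k$, paying a multiplicative overhead of $k^{O(\log k)}\cdot\log n$. I would then construct the universal set on the small domain, aiming for size $2^k\cdot k^{O(\log k)}$, and finally pull it back to $[n]$ via the hash family to obtain the claimed total bound, both in size and in construction time.

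For the reduction I would explicitly construct a family $\mathcal{H}$ of hash functions $h:[n]\to[m]$ with $m=\mathrm{poly}(k)$, of size $k^{O(\log k)}\cdot\log n$ and constructible in that time, satisfying the property that every $k$-subset $S\subseteq[n]$ is mapped injectively (or at least sufficiently sparsely) by some $h\in\mathcal{H}$. Families with these parameters are standard; they can be built from $k$-wise independent hash functions over finite fields together with the iterated/doubling composition technique of Schmidt--Siegel, and their correctness is a routine union-bound computation.

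For the small-domain universal set I would rely on explicit almost $k$-wise independent distributions on $\B^m$: if a distribution is $(\eps,k)$-wise independent with $\eps<2^{-k-1}$, then for every $S\subseteq[m]$ of size $k$ and every $a\in\B^k$ the probability of hitting $a$ on $S$ is at least $2^{-k}-\eps>0$, so the support is itself an $(m,k)$-universal set. Off-the-shelf small-bias generators (e.g.\ Naor--Naor or Alon--Goldreich--H{\aa}stad--Peralta) give support size roughly $\mathrm{poly}(m/\eps)=\mathrm{poly}(k)\cdot 2^{O(k)}$, which is of the form $c^k$ for some $c>2$. Pushing this down to $2^k\cdot k^{O(\log k)}$ is the delicate step and is the main obstacle I expect: it requires either a further layer of splitting inside $[m]$ combined with a pigeonhole argument on the bucket patterns, or the recursive refinement at the heart of the NSS argument that ensures the base of the exponent is exactly $2$ rather than $e$ or $4$.

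Once both ingredients are in hand, the pullback is routine: for each $h\in\mathcal{H}$ and each $u$ in the small-domain universal set $U\subseteq\B^m$, output $\tilde u^{(h)}\in\B^n$ defined by $\tilde u^{(h)}_i=u_{h(i)}$. Given any $S\subseteq[n]$ of size $k$ and any $a\in\B^k$, pick $h\in\mathcal{H}$ injective on $S$ and then $u\in U$ that realizes the pattern $a$ on $h(S)$ (under the identification of $h(S)$ with $[k]$ induced by $h|_S$); by construction $\tilde u^{(h)}|_S=a$. Multiplying sizes and construction times of the two stages yields the claimed $2^k\cdot k^{O(\log k)}\cdot\log n$ bound, as desired.
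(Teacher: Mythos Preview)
The paper does not give its own proof of this claim: it is stated as a black-box citation to Naor, Schulman and Srinivasan \cite{NSS95}, with no argument supplied. So there is nothing in the paper to compare your sketch against beyond the bare reference.

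That said, your outline is a faithful high-level account of the NSS construction: a splitter family reducing $[n]$ to a domain of size $\mathrm{poly}(k)$ at cost $k^{O(\log k)}\cdot\log n$, followed by a small-domain universal set of size $2^k\cdot k^{O(\log k)}$, then pulled back. You correctly flag the genuinely hard step, namely that naive small-bias constructions on the reduced domain only give $c^k$ for some $c>2$, and that achieving base exactly $2$ requires the recursive splitting at the core of NSS. Since the paper treats this as a citation, your sketch already goes well beyond what the paper itself provides; for the purposes of this paper a one-line invocation of \cite{NSS95} is all that is expected.
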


\noindent We now state the main result of this section.
\begin{thm}\label{thm:4k_aut}
Fix integers $k\leq n$. An acyclic NFA $M$ of size $O^*(4^k\cdot k^{O(\log^2 k)})$
for $L_k(n)$ can be constructed in time $O^*(4^k\cdot k^{O(\log^2 k)})$.

\end{thm}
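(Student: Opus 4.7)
The plan is to build the NFA by a divide-and-conquer recursion that mirrors the Kneis et al.\ \cite{KMRR06} algorithm for \simpk, with the Naor-Schulman-Srinivasan universal sets of Claim~\ref{clm:univsetexplicit} supplying the derandomization as in \cite{CLSZ07}. Write $N_k$ for the NFA to be constructed for $L_k(n)$. The top-level construction will split a length-$k$ word into two halves of length $\lceil k/2\rceil$ and $\lfloor k/2\rfloor$. Given an $(n,k)$-universal set $U$ of size $2^k\cdot k^{O(\log k)}\cdot \log n$, for each $T\in U$ we build an automaton $M_T$ that forces the first half to be a simple word over the alphabet $T$ and the second half to be a simple word over $[n]\setminus T$, and we take $N_k$ to be the nondeterministic union $\bigcup_{T\in U} M_T$ (glued at a fresh start state).

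Concretely, assuming a recursively built $N_{k/2}$, each $M_T$ is obtained by concatenating two copies of $N_{k/2}$: in the first copy I delete every transition labeled by a symbol outside $T$, and in the second copy every transition labeled by a symbol inside $T$; accept states of the first are identified with the start state of the second. This keeps the construction linear in the sizes of the recursive pieces and inherits acyclicity by induction. Correctness is immediate: if $w=w_1\cdots w_k\in L_k(n)$, then $S=\{w_1,\ldots,w_k\}$ has size $k$, and the universal-set property applied to the subset $S'=\{w_1,\ldots,w_{\lceil k/2\rceil}\}$ produces some $T\in U$ with $T\cap S=S'$, so $M_T$ accepts $w$; conversely any word accepted by some $M_T$ is the concatenation of two simple segments over disjoint alphabets and therefore lies in $L_k(n)$.

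For the size I would unroll the recursion
\[
N_k \;\le\; 2\,|U|\cdot N_{k/2} \;\le\; 2^{k+1}\cdot k^{O(\log k)}\cdot \log n\cdot N_{k/2}
\]
over the $O(\log k)$ levels. The $2^{k/2^i}$ factors telescope via $\sum_{i\ge 0} k/2^i \le 2k$ into $4^k$, the $(k/2^i)^{O(\log(k/2^i))}$ factors multiply to $k^{O(\log^2 k)}$ across the $\log k$ levels, and the remaining $(\log n)^{O(\log k)}$ is absorbed into $O^*$. The total construction time is proportional to the output size, since the universal set at each level is built within its size bound by Claim~\ref{clm:univsetexplicit}.

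The step I expect to require the most care is not any single item above but the bookkeeping around the recursion: one must state the induction hypothesis uniformly for every subset of $[n]$ (so that \emph{restricting} $N_{k/2}$ to the alphabet $T$ gives exactly the simple length-$k/2$ words over $T$ rather than a strict subset), and one must track that the $k^{O(\log k)}$ factor from each universal set multiplies to $k^{O(\log^2 k)}$ and not something worse once compounded through the $\log k$ recursion layers. Handling odd $k$ cleanly at each level is a minor nuisance that splits into $\lceil k/2\rceil$ and $\lfloor k/2\rfloor$ with no effect on the asymptotics.
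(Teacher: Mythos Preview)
Your proposal is correct and follows essentially the same divide-and-conquer with $(n,k)$-universal sets as the paper; the paper parametrizes the recursive automata by a subset $S\subseteq[n]$ (building $M_{k,S}$ for $L_k(n)\cap S^k$), whereas you build a single $N_{k/2}$ and restrict by deleting transitions, which is equivalent. In fact your flagged worry about needing the induction hypothesis ``uniformly for every subset'' is unnecessary in your formulation: deleting all transitions labeled outside $T$ from an NFA for $L_{k/2}(n)$ already yields exactly $L_{k/2}(n)\cap T^{k/2}$, so no strengthened hypothesis is required.
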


\noindent Before proving the theorem we state a technical claim that will be used in the analysis.
\begin{claim}\label{clm:ceil}
For a positive integer $k$ look at the sum
\[s(k) = k+\lceil k/2 \rceil + \lceil \lceil k/2 \rceil /2 \rceil + \ldots + 1.\]
Then
\begin{itemize}
\item $s(k)\leq 2k+ 2\cdot \log k$
\item The number of summands in $s(k)$ is at most $\log k + 1$.
\end{itemize}
\end{claim}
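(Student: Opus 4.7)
Define the sequence $a_0=k$ and $a_{i+1}=\lceil a_i/2\rceil$, and let $m$ be the first index with $a_m=1$, so $s(k)=a_0+a_1+\cdots+a_m$ and the number of summands equals $m+1$. The plan is to bound the $a_i$'s individually and then bound $m$.

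For the first bullet, I would prove by induction on $i$ that
\[
a_i \;\leq\; \frac{k}{2^i}+1.
\]
The base case $i=0$ is immediate, and the inductive step uses the elementary inequality $\lceil x/2\rceil\leq x/2+1/2$, together with the geometric series $\tfrac{1}{2}+\tfrac{1}{4}+\cdots < 1$. Summing the per-term bounds gives
\[
s(k) \;\leq\; \sum_{i=0}^{m}\left(\frac{k}{2^i}+1\right) \;\leq\; 2k + (m+1),
\]
so once I control $m$ the first bullet follows.

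For the second bullet, I would track the bit-length $b_i \triangleq \lfloor \log_2 a_i\rfloor + 1$ of $a_i$. The key observation is that if $a_i \leq 2^b$ then $\lceil a_i/2\rceil \leq 2^{b-1}$, so after at most $\lceil \log_2 k\rceil$ halvings the sequence hits $1$. Hence $m\leq \lceil \log_2 k\rceil$ and the number of summands is at most $\lceil \log_2 k\rceil + 1$, which is bounded by $\log k + 1$ under the paper's logarithmic convention (or at worst $\log k + 2$, with the extra constant absorbed into the factor-of-two slack of the first bullet). Plugging this $m$-bound into the inequality above yields $s(k)\leq 2k+\log k + 2 \leq 2k + 2\log k$ for $k\geq 4$, with the small values $k\in\{1,2,3\}$ verified by direct computation.

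There is no real obstacle here; the only mild subtlety is being careful with the ceilings so that the per-step additive error accumulates to an $O(\log k)$ term rather than being absorbed trivially, and checking that the bit-length argument gives exactly the claimed $\log k + 1$ bound on the number of summands (as opposed to $\log k + O(1)$) for all $k\geq 1$.
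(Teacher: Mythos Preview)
The paper does not actually prove this claim; it is stated as a ``technical claim that will be used in the analysis'' and then the text immediately proceeds to the proof of Theorem~\ref{thm:4k_aut}. So there is nothing to compare your argument against.

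Your argument is sound and is the natural one. The induction $a_i\le k/2^i+1$ is correct (using $\lceil x/2\rceil\le x/2+1/2$), and the bit-length bound $m\le\lceil\log_2 k\rceil$ is correct (since $a_i\le 2^b$ implies $\lceil a_i/2\rceil\le 2^{b-1}$ for $b\ge 1$). Together these give $s(k)<2k+\lceil\log_2 k\rceil+1$, which for $k\ge 3$ is at most $2k+2\log_2 k$, and the cases $k\in\{1,2\}$ are immediate.

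You are also right to flag the second bullet. As literally stated it is slightly off: for $k=3$ the sequence is $3,2,1$ with three summands, while $\log_2 3+1\approx 2.585$; similarly $k=5$ gives four summands versus $\log_2 5+1\approx 3.32$. The honest bound your argument produces is $\lceil\log_2 k\rceil+1$ (equivalently $\lfloor\log_2 k\rfloor+2$ for non-powers of two), not $\log_2 k+1$. This discrepancy is harmless for the paper's application, where only the order of magnitude matters, and your remark that the extra constant is absorbed into the slack of the first bullet is exactly right.
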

\noindent We proceed with the proof of Theorem \ref{thm:4k_aut}.
\begin{proof}
The following definition will be convenient for the proof.
For a subset $S\subseteq [n]$ we define the language $L_k(n,S)\triangleq L_k(n)\cap S^k$.
In words, $L_k(n,S)$ is simply the set of words in $w\in [n]^k$ whose symbols are all distinct,
and are also all in $S$.
Fix any positive integer $n$.
For every $1 \leq k \leq n$ and $S\subseteq [n]$ we construct an NFA $M_{k,S}$ for $L_k(n,S)$
by induction on $k$ as follows.

For $k=1$, given $w\in [n]^k$ the $M_{k,S}$ will simply check
if $w_1\in S$ and if $|w|=1$. Such $M_{k,S}$ of size $3\cdot n$ can be constructed.
Now assume we have a construction of an NFA $M_{k',S}$ for every $1\leq k' < k$
and $S\subseteq[n]$.
Before constructing $M_{k,S}$, let us construct as a component
 an NFA for a simpler language.
Fix disjoint subsets $S_1, S_2\subseteq [n]$.
We will define an NFA $M_{k,S_1,S_2}$
that accepts exactly the words $w\in L_k(n)$
whose first $\lceil k/2 \rceil$ symbols are in $S_1$, and last
$\lfloor k/2 \rfloor$ symbols are in $S_2$.
$M_{k,S_1,S_2}$ can be constructed as follows.
$M_{k,S_1,S_2}$ will consist of a copy
of $M_{\lceil k/2 \rceil,S_1}$ that reads the first $\lceil k/2 \rceil$
symbols of $w$, followed by a copy of $M_{\lfloor k/2 \rfloor,S_2}$
that reads the last $\lfloor k/2 \rfloor$ symbols of $w$.

Now, given $S\subseteq [n]$ we construct $M_{k,S}$ as follows.
Fix an $(n,k)$-universal set $U$ of size $|U|= 2^k\cdot k^{O(\log k)}\cdot \log n$
obtained from Theorem \ref{clm:univsetexplicit}.
For every set $T\in U$ we put an  $\eps$-transition from the start state of $M_{k,S}$
to a copy of the NFA $M_{k, S\cap T,S\cap \bar{T}}$.
Thus, $M_{k,S}$ accepts a word $w$ if and only if one of the automata $\sett{M_{k, S\cap T,S\cap \bar{T}}}{T\in U}$
accepts $w$.
Let us show that indeed $L(M_{k,S}) = L_k(n,S)$.
Note that for any disjoint subsets $S_1,S_2\subseteq S$,
$M_{k,S_1,S_2}$ accepts a \emph{subset} of $L_k(n,S)$. Hence,
it is clear that $M_{k,S}$ does not accept any words outside of $L_k(n,S)$.
Now, fix a word $w\in L_k(n,S)$ and let us show that one the machines $\sett{M_{k, S\cap T,S\cap \bar{T}}}{T\in U}$
accepts it.
Let $S_1\subseteq S$ be the set of the first $\lceil k/2\rceil$ symbols that appear in $w$.
Let $S_2\subseteq S$ be the set of the last $\lfloor k/2\rfloor$ symbols that appear in $w$.
Note that as $w\in L_k(n,S)$, $S_1$ and $S_2$ must be disjoint and $|S_1\cup S_2|=k$.
From the property of an $(n,k)$-universal set, there must exists a set $T\in U$
such that $T\cap(S_1\cup S_2) = S_1$.
For this $T$ $M_{k, S\cap T,S\cap \bar{T}}$ accepts $w$.
We have shown that $L(M_{k,S}) = L_k(n,S)$.
Now let us bound the size of $M_{k,S}$.
For $k\leq n$, denote by $T_k$ the maximum over $S\subseteq [n]$ of the size of the NFA $M_{k,S}$
constructed in this way.
Using this notation we have for any disjoint subsets $S_1, S_2\subseteq [n]$ that
\[|M_{k,S_1,S_2}|\leq T_{\lceil k/2\rceil} + T_{\lfloor k/2\rfloor} + 1 \leq 2\cdot  T_{\lceil k/2\rceil} + 1\]
, where $|M_{k,S_1,S_2}|$ denotes the size of $M_{k,S_1,S_2}$ in the construction described above.
Now note that
$M_{k,S}$ consists of $|U|$ copies of machines $M_{k,S_1,S_2}$ (and the $\eps$-transitions to these copies).
Using this we have
\[T_k \leq  2^k\cdot k^{O(\log k)}\cdot \log n \cdot 2\cdot ( T_{\lceil k/2\rceil} + 1) +  2^k\cdot k^{O(\log k)}\cdot \log n +1
=2^k\cdot k^{O(\log k)}\cdot \log n \cdot T_{\lceil k/2\rceil}.\]
Using Claim \ref{clm:univsetexplicit}, and $T_1\leq 3n$ we get
\[ T_k \leq 2^{2k + 2\log k}\cdot k^{O(\log^2 k)}\cdot \log n ^{\log k + 1}\cdot 3n\]
Using the fact that for any $k$, either $\log n ^{\log k } \leq k^{\log^2 k}$  or
$\log n ^{\log k} \leq n$ we can write
\[T_k = O^*(4^k\cdot k^{O(\log^2 k)}).\]
\end{proof}

\section{Non-deterministic XOR automata for $L_k(n)$}\label{sec:XOR}
Informally, a non-deterministic xor automaton (NXA) is simply an NFA
where the acceptance criteria for a word is that \emph{there is an odd number of accepting paths for $w$},
rather than just one.
It will be convenient to formally define the \emph{XOR-language} of an NFA rather than
formally defining NXAs.
\begin{dfn}[The language $\xorL$]\label{dfn:mult_auto}
Let $M$ be a non-deterministic xor automaton over an alphabet $\Sigma$. We define the \emph{XOR-language}
of $M$, denoted $\xorL(M)\subseteq \Sigma^*$, to be the set of words $w$ that have an odd number of paths to
an accept state in $M$.
\end{dfn}

The purpose of this section is to construct a small set of NFAs of size $O(2^k\cdot k\cdot n)$ such that the union of their
XOR-languages is $L_k(n)$. This construction can be viewed as an `automata interpretation' of (a simplified version) of the algorithm for \simpk of Abasi and Bshouty \cite{AB13}. This will be used to get an algorithm for \simpk with running time $O^*(8^k)$. We proceed with the construction.

 In the rest of this section sums are always in $\F_2$, i.e., modulu $2$. For each non-empty subset $S\subseteq [k]$, define the function $\phi_S:(\B^k)^k\to \B$ by
\[\phi_S(v_1,\ldots,v_k) \triangleq \prod_{i=1}^k \sum_{j\in S} v_{i,j}\]

and define $\phi:(\B^k)^k \to \B$ by
\[\phi(v_1,\ldots,v_k) \triangleq \sum_{\emptyset\neq S\subseteq [k]} \phi_S(v_1,\ldots,v_k).\]

From Ryser's formula for the permanent\cite{R63}  we know that
\begin{lemma}\label{lem:phi_is_det}
 $\phi(v_1,\ldots,v_k)$ is equal to the determinant
of the $k\times k$ matrix over $\F_2$ whose columns are $v_1,\ldots,v_k$.
\end{lemma}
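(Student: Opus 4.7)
The plan is to apply Ryser's formula directly, exploiting the fact that we work over $\F_2$ where signs disappear and the permanent coincides with the determinant. Let $A$ denote the $k \times k$ matrix over $\F_2$ whose columns are $v_1, \ldots, v_k$, so that the $(i,j)$ entry of $A$ equals $v_{j,i}$ (or rather $v_{i,j}$, depending on indexing; I will fix the convention so that the $i$-th coordinate of the $j$-th column vector is the $(i,j)$ entry of $A$, matching the definition of $\phi_S$).

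First I would recall Ryser's formula in its standard form: for a $k \times k$ matrix $A = (a_{ij})$ over any commutative ring,
\[\mathrm{perm}(A) \;=\; (-1)^k \sum_{S \subseteq [k]} (-1)^{|S|} \prod_{i=1}^k \sum_{j \in S} a_{ij}.\]
Specializing to $\F_2$, every sign becomes $+1$, and the $S = \emptyset$ summand contributes $\prod_{i=1}^k 0 = 0$ since $k \geq 1$. Hence over $\F_2$,
\[\mathrm{perm}(A) \;=\; \sum_{\emptyset \neq S \subseteq [k]} \prod_{i=1}^k \sum_{j \in S} a_{ij}.\]
Substituting $a_{ij} = v_{i,j}$ shows that the right-hand side is exactly $\sum_{\emptyset \neq S \subseteq [k]} \phi_S(v_1, \ldots, v_k) = \phi(v_1, \ldots, v_k)$.

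Finally I would observe that over $\F_2$ we have $\det(A) = \mathrm{perm}(A)$, since the two expressions differ only by the signs $\mathrm{sgn}(\pi)$ in the Leibniz expansion, and $\mathrm{sgn}(\pi) \in \{\pm 1\}$ collapses to $1$ modulo $2$. Combining the two identities yields $\phi(v_1, \ldots, v_k) = \det(A)$, as claimed.

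There is no real obstacle here: the argument is a textbook application of Ryser's formula, and the only subtlety to be careful about is (i) correctly verifying that the indexing convention in the definition of $\phi_S$ matches the rows of $A$, and (ii) noting that the $S = \emptyset$ term vanishes so that restricting the sum to non-empty $S$ causes no loss. Both are routine.
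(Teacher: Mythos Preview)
Your proposal is correct and follows exactly the route the paper indicates: the paper simply states that the lemma follows ``from Ryser's formula for the permanent'' without further elaboration, and you have supplied precisely the standard details (signs collapse over $\F_2$, the $S=\emptyset$ term vanishes, and permanent coincides with determinant modulo $2$). The only minor wrinkle is the row/column indexing, which you flag yourself; since both permanent and determinant are invariant under transpose, this is harmless.
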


Fix a $k\times n$ matrix $A$ over $\F_2$ with columns $v_1,\ldots,v_n \in \B^k$.
For each non-empty subset $S\subseteq [k]$,
we define a function $f_{A,S}:[n]^k\to\B$
by $f_{A,S}(i_1,\ldots,i_k) \triangleq \phi_S(v_{i_1},\ldots,v_{i_k})$.
We define $f_A:[n]^k\to \B$ by
\[f_A(i_1,\ldots,i_k) \triangleq \phi(v_{i_1},\ldots,v_{i_k}) = \sum_{\emptyset\neq S\subseteq [k]} \phi_S(v_{i_1},\ldots,v_{i_k})=\sum_{\emptyset\neq S\subseteq [k]} f_{A,S}(i_1,\ldots,i_k).\]

\begin{lemma}\label{lem:aut_for_phiS}
Fix any $k\times n$ matrix $A$ over $\F_2$ and non-empty $S\subseteq [k]$.
There is a deterministic automaton $M_{A,S}$ for $f_{A,S}^{-1}(1)$ with
$k+1$ states and at most $k\cdot n$ edges.
\end{lemma}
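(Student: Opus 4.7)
The plan is to observe that over $\F_2$ a product is $1$ if and only if every factor is $1$, so the condition $f_{A,S}(i_1,\ldots,i_k)=1$ decouples into $k$ independent conditions on the individual symbols of the input word. Once this is noted, the automaton is almost forced upon us: it is a length-$k$ path that only accepts words whose every letter lies in a certain ``good'' subset of $[n]$.

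More concretely, I will first unpack the definition: writing $v_{i_\ell,j}$ for the $j$-th coordinate of the $i_\ell$-th column of $A$, we have
\[
f_{A,S}(i_1,\ldots,i_k) \;=\; \prod_{\ell=1}^k \Bigl(\sum_{j\in S} v_{i_\ell,j}\Bigr),
\]
and since this product and sum are over $\F_2$, it equals $1$ iff $\sum_{j\in S} v_{i_\ell,j}=1$ for every $\ell\in [k]$. Define
\[
G_{A,S} \;\triangleq\; \Bigl\{\, i\in [n] \;:\; \sum_{j\in S} v_{i,j} = 1 \,\Bigr\} \;\subseteq\; [n].
\]
Then $f_{A,S}^{-1}(1) = G_{A,S}^{\,k}$, the set of words of length exactly $k$ over the alphabet $G_{A,S}$.

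Now I build a DFA for $G_{A,S}^{\,k}$ as a simple layered path. Take states $Q=\{q_0,q_1,\ldots,q_k\}$, start state $q_0$, accepting state $\{q_k\}$; for every $\ell\in\{0,1,\ldots,k-1\}$ and every $i\in G_{A,S}$, put a single transition $(q_\ell\to q_{\ell+1})$ labeled $i$. This is clearly deterministic (from each state each letter has at most one outgoing transition), has $k+1$ states, and has at most $k\cdot |G_{A,S}|\le k\cdot n$ transitions. A routine induction on $\ell$ shows $q_\ell\in M_{A,S}(w)$ iff $w\in G_{A,S}^{\,\ell}$, and hence $L(M_{A,S}) = G_{A,S}^{\,k} = f_{A,S}^{-1}(1)$.

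I do not anticipate a real obstacle here; the only thing to be careful about is the $\F_2$-arithmetic, namely that the product in $\phi_S$ is the conjunction of its factors, which is precisely what makes the constraint separable across positions and therefore expressible by a path DFA. Computing $G_{A,S}$ from $A$ takes time $O(kn)$, so the construction itself is also efficient, which will matter when this lemma is used in the NXA construction later.
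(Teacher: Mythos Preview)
Your proposal is correct and matches the paper's proof essentially step for step: you define the same ``good'' set (the paper calls it $T$), observe $f_{A,S}^{-1}(1)=T^k$, and build the same length-$k$ path DFA with states $q_0,\ldots,q_k$ and transitions labeled by elements of $T$. Your write-up is in fact slightly more careful than the paper's (which contains a minor typo, writing $i\in S$ where $i\in T$ is meant).
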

\begin{proof}
Let $v_1,\ldots,v_n$ be the columns of $A$.
Let $T\subseteq [n]$ be the set of elements $i\in [n]$ such that
\[\sum_{j\in S} v_{i,j}=1.\]
Observe that $f_{A,S}(i_1,\ldots,i_k)=1$ if and only if $i_1,\ldots,i_k$ are all contained in $T$.
This motivates the following construction:
$M_{A,S}$ will contain the start state $q_0$,
and the states $q_1,\ldots,q_k$.
$q_k$ will be the only accept state.
For each $0\leq j \leq k-1$, and for every $i\in S$.
There will be an edge from $q_j$ to $q_{j+1}$ labeled $i$.
\end{proof}

\begin{thm}\label{lem:aut_for_phi}
Fix any positive integers $k\leq n$ and any $k\times n$ matrix $A$ over $\F_2$.
There is an NFA $M_A$ over $[n]$ of size $O(2^k\cdot k\cdot n)$ such that
$\xorL(M_A) = f^{-1}(A)$.
\end{thm}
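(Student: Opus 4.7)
The plan is to build $M_A$ as a ``disjoint union'' (glued at a common start state) of the deterministic automata $M_{A,S}$ produced by Lemma \ref{lem:aut_for_phiS}, and to read off the XOR-language identity from the decomposition
\[f_A(i_1,\ldots,i_k) \;=\; \sum_{\emptyset\neq S\subseteq [k]} f_{A,S}(i_1,\ldots,i_k)\pmod 2\]
already established before the statement.

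First, for each non-empty $S\subseteq [k]$, Lemma \ref{lem:aut_for_phiS} provides a deterministic automaton $M_{A,S}$ with $k+1$ states and at most $k\cdot n$ transitions whose language equals $f_{A,S}^{-1}(1)$. Since each $M_{A,S}$ is deterministic, for every word $w\in [n]^k$ the number of accepting paths of $M_{A,S}$ on $w$ is exactly the indicator $f_{A,S}(w)\in\{0,1\}$. This is the crucial observation that lets us switch from language membership to counting accepting paths modulo $2$.

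Next, I would construct $M_A$ by introducing a fresh start state $q_0^\star$ and, for every non-empty $S\subseteq [k]$, taking a fresh vertex-disjoint copy of $M_{A,S}$ together with an $\epsilon$-transition from $q_0^\star$ to the start state of that copy (exactly the same gluing style used in the proof of Theorem \ref{thm:4k_aut}). Because the copies are vertex-disjoint, every accepting path of $M_A$ on input $w$ begins with a unique choice of $S$ followed by an accepting path of the corresponding $M_{A,S}$ on $w$. Hence the total number of accepting paths of $M_A$ on $w$ equals
\[\sum_{\emptyset\neq S\subseteq [k]} f_{A,S}(w),\]
and reducing mod $2$ gives $f_A(w)$. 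Therefore $w\in \xorL(M_A)$ iff $f_A(w)=1$, i.e.\ $\xorL(M_A)=f_A^{-1}(1)$, as required.

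For the size, there are $2^k-1$ subsets $S$, each contributing $k+1$ states and $O(k\cdot n)$ transitions, plus one new start state and $2^k-1$ $\epsilon$-transitions from it; summing yields $\size(M_A)=O(2^k\cdot k\cdot n)$. The only mildly delicate point is that \emph{path multiplicities, not just language membership, must combine additively across the copies}: this is exactly what vertex-disjointness of the copies guarantees, and it is why the construction must keep the $M_{A,S}$'s strictly separate rather than, say, attempting to share states between them. If the ambient NFA model disallows $\epsilon$-transitions, one can equivalently merge the initial states of the copies into $q_0^\star$ by inheriting the outgoing edges of each copy's original start state; this preserves both the path count and the asymptotic size bound.
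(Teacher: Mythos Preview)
Your proposal is correct and follows essentially the same approach as the paper: take one copy of each deterministic $M_{A,S}$, glue them in parallel so that accepting-path counts add across copies, and use determinism of each $M_{A,S}$ to identify its path count with the indicator $f_{A,S}(w)$. The only cosmetic difference is that the paper merges the common start and accept states directly rather than using $\epsilon$-transitions, which you already note as an equivalent alternative.
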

\begin{proof}
For every non-empty $S\subseteq [k]$,
$M_A$ will contain a copy of the automaton $M_{A,S}$ as described in Lemma \ref{lem:aut_for_phiS}.
We unite the start state $q_0$ and accept state $q_k$ of all the automata $M_{A,S}$
to one start state $q_0$ and accept state $q_k$ of $M_A$.
$\xorL(M_A)$ contains exactly the words $(i_1,\ldots,i_k)$ that
are accepted by an odd number of the automata $M_{A,S}$.
Since $L(M_{A,S}) = f^{-1}_{A,S}(1)$, this is exactly $f^{-1}_A(1)$.
\end{proof}

\subsection{Covering matrices}
We wish to show there is a small set of matrices $A$ such that
the union of the XOR-languages of the corresponding automata $M_A$
is equal to $L_k(n)$.
This motivates the following definition.
\begin{dfn}\label{dfn:covering}
Let $\A$ be a set of $k\times n $ matrices over $\F_2$.
We say $\A$ is \emph{$(n,k)$-covering}, if for every subset of $k$ distinct columns
$I=(i_1,\ldots,i_k)\subseteq [n]$, there is a matrix $A\in \A$
such that the columns $(i_1,\ldots,i_k)$ in $A$ are linearly independent.
\end{dfn}

From now on for $I=(i_1,\ldots,i_k)\subseteq [n]$ and a $k\times n $ matrix $A$ over $\F_2$
we denote by $A_I$ the restriction of $A$ to the columns $(i_1,\ldots,i_k)$.
\begin{lemma}\label{lem:covering_exist}
Fix any positive integers $k\leq n$.
There exists a set $\A$ of $k\times n $ matrices over $\F_2$
that is \emph{$(n,k)$-covering} with $|\A|\leq 2k\cdot \log n$.
\end{lemma}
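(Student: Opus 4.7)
The plan is to apply the probabilistic method. I draw $t$ matrices $A^{(1)}, \ldots, A^{(t)}$ independently and uniformly from $\F_2^{k \times n}$, and argue that, for a suitable choice $t \leq 2k \log n$, the random set $\A = \{A^{(1)}, \ldots, A^{(t)}\}$ is $(n,k)$-covering with positive probability; existence of a set of the claimed size then follows immediately.

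The key estimate is classical. For any fixed $k$-subset of column indices $I \subseteq [n]$, the restriction $A^{(j)}_I$ is uniformly distributed over $k \times k$ matrices over $\F_2$, so
\[\Pr\bigl[A^{(j)}_I \text{ is invertible}\bigr] \;=\; \prod_{i=1}^{k}(1 - 2^{-i}) \;\geq\; c_0,\]
where $c_0 \triangleq \prod_{i=1}^{\infty}(1 - 2^{-i}) \approx 0.2888$ is a positive absolute constant; in particular $c_0 > 1/4$. By independence of the $A^{(j)}$ and a union bound over the $\binom{n}{k} \leq n^k$ possible $I$, the probability that $\A$ fails to be $(n,k)$-covering is at most $n^k(1-c_0)^t$. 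Taking $t$ to be a sufficiently large constant multiple of $k \log n$ forces this quantity below $1$, and the union-bound argument then produces an $\A$ of the required form.

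The only real obstacle is tightening the constant to the claimed $2k \log n$. This is a purely numerical point: using $(1-c_0)^t \leq e^{-c_0 t}$ together with $\log \binom{n}{k} \leq k \log n$ gives a bound of the form $t \geq (\ln 2 / c_0) \cdot k \log n$, and since $c_0$ is bounded well away from zero one can tune the analysis (for example by separating out small $k$, where $\prod_{i=1}^{k}(1-2^{-i})$ is substantially larger than $c_0$, e.g. $1/2$ at $k=1$ and $3/8$ at $k=2$) to reach the stated bound. No deeper idea beyond standard probabilistic method and the formula for the number of invertible matrices over $\F_2$ is required.
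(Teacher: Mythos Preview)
Your approach is the same as the paper's---the probabilistic method with a union bound over all $k$-subsets of columns---and is correct. In fact you are more careful than the paper, which asserts that a uniformly random $k\times k$ matrix over $\F_2$ is non-singular with probability at least $1/2$ (this is false for $k\geq 2$; the true value is $\prod_{i=1}^{k}(1-2^{-i})\to c_0\approx 0.2888$); your honest remark that the exact constant $2k\log n$ requires some numerical care is therefore well placed, and in any case only $O(k\log n)$ is needed downstream.
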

\begin{proof}
We use the probabilistic method.
It is known that when choosing a random $k\times k$ matrix over $\F_2$
the probability that it is non-singular is at least half.
Fix $I=(i_1,\ldots,i_k)\subseteq [n]$.
It follows that when choosing a random $k\times n$ $A$ matrix over $\F_2$,
the probability that $A_I$ is singular is at most half.
Thus, when independently choosing $2k\cdot \log n$ random $k\times n$ matrices $A^1,\ldots, A^{2k\cdot \log n}$
the probability that the columns $I$ are dependent in all of them
is at most $2^{-2k\cdot \log n}= n^{-2k}$.
Taking a union bound over all $\binom{n}{k}\leq n^k$ choices of $I$
we see there must be a choice of $\A=\set{A^1,\ldots, A^{2k\cdot \log n}}$ that is $(n,k)$-covering.
\end{proof}

\begin{thm}\label{thm:covering_gives_L_k}
Fix any positive integers $k\leq n$.
Let $\A$ be a family of $k\times n $ matrices over $\F_2$.
that is \emph{$(n,k)$-covering}.
Then the union of languages $\bigcup_{A\in \A} \xorL(M_A)$
is equal to $L_k(n)$.
\end{thm}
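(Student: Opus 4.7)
The plan is to unfold the definitions and translate the claim into a statement about non-singularity of column submatrices over $\F_2$. By construction of $M_A$ (Theorem \ref{lem:aut_for_phi}) we have $\xorL(M_A) = f_A^{-1}(1)$, where $f_A(i_1,\ldots,i_k) = \phi(v_{i_1},\ldots,v_{i_k})$ and $v_1,\ldots,v_n$ are the columns of $A$. Applying Lemma \ref{lem:phi_is_det}, $f_A(i_1,\ldots,i_k)$ is exactly the determinant, over $\F_2$, of the $k\times k$ matrix $A_I$ whose columns are $v_{i_1},\ldots,v_{i_k}$. Since determinants over $\F_2$ take values in $\{0,1\}$, this determinant equals $1$ if and only if $A_I$ is non-singular, equivalently if and only if the columns $v_{i_1},\ldots,v_{i_k}$ are linearly independent in $\F_2^k$.

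With this translation in hand I would prove the two inclusions separately. For $L_k(n) \subseteq \bigcup_{A\in\A} \xorL(M_A)$, take any $(i_1,\ldots,i_k)\in L_k(n)$; the indices are by definition distinct, so the $(n,k)$-covering property of $\A$ yields an $A\in\A$ whose columns $v_{i_1},\ldots,v_{i_k}$ are linearly independent, hence $f_A(i_1,\ldots,i_k)=1$ and $(i_1,\ldots,i_k)\in\xorL(M_A)$. For the reverse inclusion, suppose $(i_1,\ldots,i_k)\in\xorL(M_A)$ for some $A\in\A$. Then $A_I$ is non-singular; but if two of the indices $i_s, i_t$ coincided, then two columns of $A_I$ would be identical and $A_I$ would be singular, a contradiction. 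Hence $i_1,\ldots,i_k$ are distinct, i.e.\ $(i_1,\ldots,i_k)\in L_k(n)$.

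There is no real obstacle here; the whole content sits in Lemma \ref{lem:phi_is_det} (the identification of $\phi$ with the determinant, via Ryser's formula) together with the covering property of $\A$. The only point I would be careful to flag in writing is the asymmetry between the two directions: distinctness of the indices is strictly weaker than linear independence of the corresponding columns in a \emph{fixed} $A$, which is precisely why a single matrix is insufficient and the covering family $\A$ is needed to obtain the forward inclusion. Conversely, singularity of $A_I$ whenever two indices coincide is automatic and applies uniformly to every $A\in\A$, so no covering hypothesis is required for the reverse inclusion.
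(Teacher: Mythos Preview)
Your proposal is correct and follows essentially the same route as the paper: both directions hinge on identifying $\xorL(M_A)=f_A^{-1}(1)$ with the non-singularity of the column submatrix via Lemma~\ref{lem:phi_is_det}, then using the covering property for $L_k(n)\subseteq\bigcup_A \xorL(M_A)$ and the repeated-column observation for the reverse inclusion. Your additional remark about the asymmetry between the two directions is a nice expository touch but not a departure in method.
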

\begin{proof}
Fix a word $w=(i_1,\ldots,i_k)\in [n]^k$ that
is not in $L_k(n)$.
Then for any $k\times n$ matrix $A$, $f_A(i_1,\ldots,i_k)$ is equal
to the determinant of a $k\times k$ matrix that has at least two identical columns
so $f_A(i_1,\ldots,i_k)=0$. This exactly means that $w\notin \xorL(M_A)$.
On the other hand, given $w=(i_1,\ldots,i_k)\in L_k(n)$, i.e. $i_1\neq \ldots\neq i_k$, we have some $A\in \A$
such that the columns $(i_1,\ldots,i_k)$ in $A$ are linearly independent.
For this $A$, $f_A(i_1,\ldots,i_k)=1$ and therefore $w\in \xorL(M_A)$.
\end{proof}
\begin{cor}\label{cor:mult_aut_for_Lk}
Fix any positive integers $k\leq n$.
There is a $\M$  of  $ 2k\cdot \log n$ NFAs, each of size at most $O(n\cdot 2^k)$ such that
 $\bigcup_{M\in \M} \xorL(M)= L_k(n)$.
\end{cor}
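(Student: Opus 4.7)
The corollary is a direct assembly of the three preceding ingredients, so the plan is almost purely bookkeeping. First, I would invoke Lemma~\ref{lem:covering_exist} to obtain an $(n,k)$-covering family $\A$ of $k\times n$ matrices over $\F_2$ with $|\A|\le 2k\cdot \log n$. Next, for each $A\in\A$ I would apply Theorem~\ref{lem:aut_for_phi} to build an NFA $M_A$ over alphabet $[n]$ with $\xorL(M_A)=f_A^{-1}(1)$ and $\size(M_A)=O(2^k\cdot k\cdot n)$. Taking $\M \triangleq \{M_A : A\in\A\}$ then gives a family of $2k\cdot\log n$ NFAs of the claimed individual size.

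The remaining step is to verify the language identity $\bigcup_{M\in\M}\xorL(M)=L_k(n)$, which is exactly the conclusion of Theorem~\ref{thm:covering_gives_L_k} applied to the covering family $\A$: words $(i_1,\ldots,i_k)\notin L_k(n)$ have a repeated coordinate, so every $f_A$ evaluated on them is a determinant with repeated columns and vanishes, placing such words outside every $\xorL(M_A)$; while for $(i_1,\ldots,i_k)\in L_k(n)$ the covering property of $\A$ furnishes some $A\in\A$ whose columns $i_1,\ldots,i_k$ are linearly independent, giving $f_A(i_1,\ldots,i_k)=1$ and hence membership in $\xorL(M_A)$.

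There is no genuine obstacle — all the combinatorial and algebraic content (Ryser's formula identity for $\phi$, the existence of a small covering family via the probabilistic method, and the construction of $M_A$ from the $M_{A,S}$ by merging start and accept states) has already been established. The only minor point to be careful about is the size bookkeeping: each $M_{A,S}$ from Lemma~\ref{lem:aut_for_phiS} contributes $O(k\cdot n)$ edges, and merging the $2^k-1$ copies through a common start and accept state yields the $O(2^k\cdot k\cdot n)$ bound, which matches the $O(n\cdot 2^k)$ stated in the corollary up to the $k$ factor absorbed into the $O^*$-style notation used in this section.
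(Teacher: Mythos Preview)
Your proposal is correct and follows exactly the intended route: the paper states this corollary without proof precisely because it is the immediate concatenation of Lemma~\ref{lem:covering_exist}, Theorem~\ref{lem:aut_for_phi}, and Theorem~\ref{thm:covering_gives_L_k}, which you have assembled in the right order. Your observation about the extra factor of $k$ in the size bound (the $O(2^k\cdot k\cdot n)$ of Theorem~\ref{lem:aut_for_phi} versus the $O(n\cdot 2^k)$ in the corollary statement) is accurate and worth flagging; the paper is slightly loose here.
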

\subsection{An algorithm for \simpk via XOR automata}
We now construct an NFA whose XOR-language is the
set of simple $k$-paths in a graph.

\begin{cor}\label{cor:xor_aut_for_ksimp}
Fix any positive integers $k\leq n$.
Fix a directed graph $G=<[n],E>$.
Fix vertices $s,t\in V$.
There is a set $\N$ of  $ 2k\cdot \log n$ NFAs, each of size at most $O^*(2^k)$ such that
 $\bigcup_{N\in \N} \xorL(N)$ is exactly the set of simple $\mathrm{(s,t)}$-$k$-paths in $G$.
\end{cor}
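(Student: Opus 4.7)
The plan is to combine the set $\M$ of $2k\log n$ NFAs from Corollary \ref{cor:mult_aut_for_Lk}, whose XOR-languages cover $L_k(n)$, with the path automaton $M(G,s,t)$ from Definition \ref{dfn:path_aut}, which recognizes exactly the words that trace an $(s,t)$-path in $G$. Concretely, for each $M\in\M$ I form the intersection NFA $N_M\triangleq M\cap M(G,s,t)$ as in Definition \ref{dfn:intersectionNFA}, and set $\N\triangleq\{N_M : M\in\M\}$, so that $|\N|=2k\log n$ as required.

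The crucial observation is that $M(G,s,t)$ is \emph{deterministic}: from a state $u\in V$ the symbol $v$ has a unique outgoing transition, namely to $v$ when $(u,v)\in E$ and none otherwise. Hence every word $w$ has at most one accepting run in $M(G,s,t)$. Because accepting runs of an intersection NFA biject with pairs of accepting runs of its two factors on the same input, the number of accepting runs of $N_M$ on $w$ equals the product of the accepting-run counts of $M$ and of $M(G,s,t)$ on $w$. When $w\in L(M(G,s,t))$ the second factor is $1$, so the parity in $N_M$ agrees with that in $M$; when $w\notin L(M(G,s,t))$ the count is $0$. This gives
\[\xorL(N_M)\;=\;\xorL(M)\,\cap\,L(M(G,s,t)).\]
Taking the union over $M\in\M$ and invoking Corollary \ref{cor:mult_aut_for_Lk} together with the path-automaton lemma yields $\bigcup_{N\in\N}\xorL(N)=L_k(n)\cap L(M(G,s,t))$, which is precisely the set of length-$k$ words with pairwise distinct symbols that trace an $(s,t)$-path in $G$, i.e.\ the simple $(s,t)$-$k$-paths.

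For the size bound, each $M\in\M$ has size $O(n\cdot 2^k)$ and $M(G,s,t)$ has size $O(n^2)$, so the product construction of Definition \ref{dfn:intersectionNFA} gives $\size(N_M)=O(\size(M)\cdot\size(M(G,s,t)))=O(n^3\cdot 2^k)=O^*(2^k)$. The only real subtlety, and the main thing to get right, is the parity-preservation step: intersecting with a genuinely non-deterministic automaton would in general corrupt the XOR semantics by multiplying accepting-run counts that could well be even. It is essential to exploit that $M(G,s,t)$ is deterministic so that the product of counts is faithful to XOR-acceptance. Everything else is routine bookkeeping.
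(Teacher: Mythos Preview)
Your proof is correct and follows essentially the same approach as the paper: take the family $\M$ from Corollary~\ref{cor:mult_aut_for_Lk}, intersect each $M\in\M$ with the deterministic path automaton $M(G,s,t)$, and use that accepting-run counts multiply in the product construction so that determinism of $M(G,s,t)$ preserves XOR-acceptance. Your write-up is in fact slightly more explicit than the paper's, spelling out the identity $\xorL(N_M)=\xorL(M)\cap L(M(G,s,t))$ and the size bound.
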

\begin{proof}

We take the family $\M$ of $NFA$'s from Corollary \ref{cor:mult_aut_for_Lk}.
For each $M\in \M$ we compute the intersection NFA $N= M\cap M(G,s,t)$.

Note that the number of accepting paths of a word $w$ in  $N$ is the product
of the number of accepting paths in $M$ and $M(G,s,t)$.
As $M(G,s,t)$ is deterministic, this means $\xorL(N)$ is exactly the set of words
in $\xorL(M)$ that are also $\mathrm{(s,t)}$-paths in $G$.
We take $\N$ to be the set of all these NFAs $N$.
Hence $\bigcup_{N\in \N} \xorL(N)$ is the intersection of $L_k(n)$
with the  set of $\mathrm{(s,t)}$-$k$-paths in $G$.
\end{proof}

The work of Vuillemin and Gama \cite{VG09} on minimizing NXA gives in particular a method
to check if the XOR-language of an NFA is empty.

\begin{thm}[\cite{VG09}]\label{thm:NXA_check_empty}
There is a deterministic algorithm, that given an NFA $M$
with $s$ states, checks in time $O(s^3)$ whether $\xorL(M)=\emptyset$.
\end{thm}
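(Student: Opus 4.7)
The plan is to reduce the emptiness question for $\xorL(M)$ to a reachability problem inside a vector space over $\F_2$, then solve it with a subspace-analogue of the subset construction. For each alphabet symbol $a$ let $T_a \in \F_2^{s\times s}$ be the transition matrix with $(T_a)_{i,j}=1$ iff $(q_i\to q_j)$ is an $a$-transition in $M$. A straightforward induction shows that for a word $w=a_1\cdots a_\ell$ the entry $(T_{a_1}T_{a_2}\cdots T_{a_\ell})_{i,j}$ equals, modulo $2$, the number of $w$-labeled paths from $q_i$ to $q_j$. Hence $w\in \xorL(M)$ iff $e_{q_0}^{\top}T_w\,\mathbf{1}_F = 1$, where $e_{q_0}$ is the indicator of the start state and $\mathbf{1}_F$ the indicator of $F$.

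The key observation is then that $\xorL(M)=\emptyset$ iff every vector in
\[R\triangleq\{e_{q_0}^{\top}T_w : w\in\Sigma^*\}\]
is orthogonal to $\mathbf{1}_F$, and since orthogonality is linear this is in turn equivalent to $V\perp \mathbf{1}_F$, where $V\subseteq \F_2^{s}$ is the span of $R$. Equivalently $V$ is the smallest subspace containing $e_{q_0}^{\top}$ that is closed under right-multiplication by every $T_a$. So the algorithm is to compute a basis of $V$ and then test, for each basis vector $b$, whether $\langle b,\mathbf{1}_F\rangle=0$.

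To compute $V$ I would maintain a basis $B$ together with a worklist of ``fresh'' basis vectors not yet expanded. Initialize $B=\{e_{q_0}^{\top}\}$ with that vector on the worklist. Repeatedly pop a fresh vector $b$ and, for every symbol $a$, form the product $bT_a$ and attempt to insert it into $B$ via Gaussian elimination against the current basis; if it is independent of $B$, add it both to $B$ and to the worklist. The process halts when the worklist is empty. Correctness follows because the resulting $B$ spans a subspace that contains $e_{q_0}^{\top}$ and is closed under every $T_a$, and is minimal with this property (any new vector not already in the span would have been inserted).

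For the $O(s^3)$ bound the main accounting point is that $\dim V\le s$, so at most $s$ vectors are ever added to the basis and hence only $O(s)$ products $bT_a$ are performed per alphabet letter (and one may, without loss of generality, collapse parallel transitions so that the alphabet effectively contributes only through the total transition count, which is $O(s^2)$). Each vector--matrix product costs $O(s)$ per nonzero transition, and each Gaussian-elimination insertion costs $O(s^2)$ against a basis maintained in reduced row echelon form, for a total of $O(s^3)$. The final orthogonality test against $\mathbf{1}_F$ costs $O(s^2)$. The principal technical obstacle is the careful bookkeeping needed to keep the total work cubic rather than quartic: one must avoid re-multiplying old basis vectors by $T_a$ and must keep the reduced echelon form updated under incremental insertion. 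With that done the bound of Theorem~\ref{thm:NXA_check_empty} follows.
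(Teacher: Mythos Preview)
The paper does not actually prove this theorem; it is quoted from Vuillemin and Gama~\cite{VG09} and used as a black box, so there is no in-paper argument to compare against. That said, your linear-algebraic reduction --- representing $M$ by transition matrices $T_a\in\F_2^{s\times s}$, characterising $\xorL(M)=\emptyset$ as orthogonality of $\mathbf{1}_F$ to the smallest subspace of $\F_2^{\,s}$ that contains $e_{q_0}^{\top}$ and is closed under right-multiplication by every $T_a$, and computing that subspace by a worklist closure with Gaussian elimination --- is precisely the standard method for emptiness (and minimisation) of $\F_2$-weighted automata, and is essentially what \cite{VG09} does.

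There is, however, a genuine gap in your running-time argument. The sentence ``one may, without loss of generality, collapse parallel transitions so that the alphabet effectively contributes only through the total transition count, which is $O(s^2)$'' is not correct: transitions between the same pair of states that carry \emph{different} labels correspond to different matrices $T_a$ and cannot be merged, so the transition count may be as large as $s^2\cdot|\Sigma|$. Your worklist procedure, as written, must apply every $T_a$ to every one of the at most $s$ basis vectors and then reduce, giving $O(s^3\cdot|\Sigma|)$ rather than $O(s^3)$. The $O(s^3)$ bound in the theorem statement is really the standard automata-minimisation bound that tacitly treats $|\Sigma|$ as a constant. For the use made of it here this is harmless --- the automata $N$ in Corollary~\ref{cor:xor_aut_for_ksimp} have $O^*(2^k)$ states and transitions over the alphabet $[n]$, so cubing $\size(N)$ still yields the claimed $O^*(8^k)$ --- but your argument as it stands does not establish an $O(s^3)$ bound for arbitrary NFAs over unbounded alphabets, and no amount of ``collapsing parallel transitions'' will fix that.
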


Given a set of $\A$ of $(n,k)$-covering matrices with we could now use Theorem \ref{thm:NXA_check_empty} to solve \simpk in deterministic time $|\A|\cdot O^*(8^k)$.
However, currently there are no explicit constructions of such sets with $|\A|=\poly$.
The only explicit construction we are aware of is implicit in Lemma 51 of Bshouty\cite{Btesters}
and gives $|\A| = 2^{O(k)}\cdot \log n$.
Choosing $|\A|$ randomly would lead to a randomized algorithm for \simpk with running
time $O^*(8^k)$.
We state two open problems whose solution could lead to an $O^*(2^k)$ deterministic algorithm
for \simpk.
\begin{cor}
Suppose that
\begin{itemize}
\item Given integers $k\leq n$ we can construct a set $\A$ of $(n,k)$-covering matrices
in time $\poly$ with $\A = \poly$.
\item Given an NFA $N$ we can check in deterministic time $O(\size(N))$
whether $\xorL(N) = \emptyset$.
\end{itemize}
Then we can solve \simpk deterministically in time $O^*(2^k)$.
\end{cor}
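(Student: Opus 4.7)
The plan is to assemble the pieces already developed in the paper, feeding them into the two hypothesized subroutines. First, I would reduce the bare \simpk problem to its $\mathrm{(s,t)}$ version by adding a fresh source $s$ with outgoing edges to every vertex of $G$ and a fresh sink $t$ with incoming edges from every vertex; a simple $k$-path in $G$ now corresponds to a simple $\mathrm{(s,t)}$-$(k+2)$-path in the augmented graph, so it suffices to work with the \stsimpk formulation with parameter $k+2$. Throughout, the extra additive constant gets absorbed into $O^*(2^k)$.

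Next, I would invoke the first hypothesis to construct in $\poly$ time an $(n,k)$-covering family $\A$ of $k\times n$ matrices over $\F_2$ with $|\A|=\poly$. For each $A\in \A$, I would build the NFA $M_A$ given by Theorem~\ref{lem:aut_for_phi}, which has size $O(2^k\cdot k\cdot n)=O^*(2^k)$, and then form the intersection NFA $N_A\triangleq M_A\cap M(G,s,t)$ as in Definition~\ref{dfn:intersectionNFA}. Since $M(G,s,t)$ is deterministic with $n+1$ states and $O(n^2)$ transitions, $N_A$ still has size $O^*(2^k)$, and by Corollary~\ref{cor:xor_aut_for_ksimp} the union $\bigcup_{A\in\A} \xorL(N_A)$ is precisely the set of simple $\mathrm{(s,t)}$-$k$-paths in $G$.

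At this point the decision version is immediate: applying the second hypothesis to each $N_A$, we check emptiness of $\xorL(N_A)$ in time $O(\size(N_A))=O^*(2^k)$, and the answer to the decision version is ``yes'' iff some $N_A$ has non-empty XOR-language. Summing over $|\A|=\poly$ matrices gives total time $\poly\cdot O^*(2^k)=O^*(2^k)$.

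Finally, to actually \emph{return} a witnessing path rather than merely decide existence, I would use standard self-reducibility: having already committed to a prefix $v_1,\ldots,v_i$ of the sought path, I loop over the at most $n$ out-neighbors $v$ of $v_i$ not yet used, delete $\{v_1,\ldots,v_i\}$ from $G$, and run the decision procedure from $v$ to $t$ with parameter $k-i$ to see whether the prefix extends. Each decision call costs $O^*(2^k)$, there are at most $kn$ such calls in total, and $kn$ is polynomial, so the overall running time remains $O^*(2^k)$. The only slightly delicate point — which I would flag as the main thing to verify carefully — is that all the intermediate sizes (the matrices, $M_A$, $N_A$, and the intersection with the shrinking graph) stay within the $O^*(2^k)$ budget for the emptiness oracle, which they do because $|\A|$ and $|V|$ are polynomial and the $2^k$ blow-up occurs only once, inside $M_A$.
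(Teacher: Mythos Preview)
Your proposal is correct and is exactly the argument the paper leaves implicit: build the NFAs $N_A$ of Corollary~\ref{cor:xor_aut_for_ksimp} from an explicit $(n,k)$-covering family, test each for XOR-emptiness with the hypothesized linear-time oracle, and absorb the $|\A|=\poly$ factor into $O^*(2^k)$. The paper states the corollary without proof, so your write-up (including the $(s,t)$ reduction and the self-reducibility step to actually output a path) is precisely the intended fleshing-out; the only cosmetic fix is that you should cite Theorem~\ref{thm:covering_gives_L_k} together with the proof of Corollary~\ref{cor:xor_aut_for_ksimp} rather than the corollary itself, since the latter is stated for the specific non-constructive family of size $2k\log n$.
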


\section*{Acknowledgements}
We thank Hasan Abasi, Nader Bshouty, Michael Forbes and Amir Shpilka for helpful conversations.
\bibliography{ktsp}

\begin{thebibliography}{10}

\bibitem{BCKLMM07}
D.~R. Karger T. Lane A.~Meyerson A.~Blum, S.~Chawla and M.~Minkoff.
\newblock Approximation algorithms for orienteering and discounted-reward tsp.
\newblock {\em SIAM J. Comput.}, 37(2):653--670, 2007.

\bibitem{AB13}
H.~Abasi and N.~Bshouty.
\newblock A simple algorithm for undirected hamiltonicity.
\newblock {\em Electronic Colloquium on Computational Complexity (ECCC)},
  20:12, 2013.

\bibitem{BC13}
M.~Bateni and J.~Chuzhoy.
\newblock Approximation algorithms for the directed {\it k}-tour and {\it
  k}-stroll problems.
\newblock {\em Algorithmica}, 65(3):545--561, 2013.

\bibitem{Btesters}
N.~Bshouty.
\newblock Testers and their applications.
\newblock {\em Electronic Colloquium on Computational Complexity (ECCC)},
  19:11, 2012.

\bibitem{FLS13}
D.~Lokshtanov F.~V.~Fomin and S.~Saurabh.
\newblock Efficient computation of representative sets with applications in
  parameterized and exact algorithms.
\newblock {\em CoRR}, abs/1304.4626, 2013.

\bibitem{FT87}
M.~L. Fredman and R.~E. Tarjan.
\newblock Fibonacci heaps and their uses in improved network optimization
  algorithms.
\newblock {\em J. ACM}, 34(3):596--615, 1987.

\bibitem{GS96}
I.~Glaister and J.~Shallit.
\newblock A lower bound technique for the size of nondeterministic finite
  automata.
\newblock {\em Inf. Process. Lett.}, 59(2):75--77, 1996.

\bibitem{CLSZ07}
S.~Sze J.~Chen, S.~Lu and F.~Zhang.
\newblock Improved algorithms for path, matching, and packing problems.
\newblock In {\em SODA}, pages 298--307, 2007.

\bibitem{KMRR06}
S.~Richter J.~Kneis, D.~M{\"o}lle and P.~Rossmanith.
\newblock Divide-and-color.
\newblock In {\em WG}, pages 58--67, 2006.

\bibitem{NSS95}
L.~J.~Schulman M.~Naor and A.~Srinivasan.
\newblock Splitters and near-optimal derandomization.
\newblock In {\em FOCS}, pages 182--191, 1995.

\bibitem{AYZ08}
U.~Zwick N.~Alon, R.~Yuster.
\newblock Color coding.
\newblock In {\em Encyclopedia of Algorithms}. 2008.

\bibitem{R63}
H.~J. Ryser.
\newblock Combinatorial mathematics, the carus mathematical monographs, the
  mathematical association of america.
\newblock 1963.

\bibitem{VG09}
J.~Vuillemin and N.~Gama.
\newblock Compact normal form for regular languages as xor automata.
\newblock In {\em CIAA}, pages 24--33, 2009.

\end{thebibliography}
\bibliographystyle{plain}

\end{document}